\newtheorem{theorem}{Theorem}[section]
\newtheorem{lemma}[theorem]{Lemma}
\title{Some results on Minimum Consistent Subsets of Trees
%%%% Cite as
%%%% Update your official citation here when published 

}
\author{
  Bubai Manna \\
  Department of Mathematics \\
  Indian Institute of Technology Kharagpur \\
  Kharagpur, 721302\\
  \texttt{bubaimanna11@gmail.com} \\
  %% examples of more authors
   \And
  Bodhayan Roy \\
  Department of Mathematics \\
  Indian Institute of Technology Kharagpur \\
  Kharagpur, 721302\\
  \texttt{bodhayan.roy@gmail.com} \\
  %% \AND
  %% Coauthor \\
  %% Affiliation \\
  %% Address \\
  %% \texttt{email} \\
  %% \And
  %% Coauthor \\
  %% Affiliation \\
  %% Address \\
  %% \texttt{email} \\
  %% \And
  %% Coauthor \\
  %% Affiliation \\
  %% Address \\
  %% \texttt{email} \\
}
\begin{document}

\maketitle

%TODO mandatory: add short abstract of the document
\begin{abstract}
For a graph \emph{G = (V,E)} where each vertex is coloured by one of the colours $\{c_1, c_2, . . . , c_k\}$, consider a subset $C \subseteq V$ such that for each vertex $v \in V\backslash C$, its set of nearest neighbours in \emph{C}  contains at least one vertex of the same colour as that of \emph{v}. Such a \emph{C} is called a \emph{consistent subset (CS)}. Computing a consistent subset of the minimum size is called the \emph{Minimum Consistent Subset (MCS)} problem. \emph{MCS} is known to be NP-complete for planar graphs. We propose a polynomial-time algorithm for finding a minimum consistent subset of a \emph{k-chromatic} spider graph when \emph{k} is a constant and $k\ge 3$. We also show \emph{MCS} remains \emph{NP-complete} on trees.
\end{abstract}

\keywords{Consistent subset, Graphs, Spiders, Trees, Algorithm, Complexity, NP-Complete}

\section{Introduction}

Let $P=\{p_1, p_2, \dots, p_n\}$ be a set of \emph{n} points on a plane, and $\{c_1, c_2, . . . , c_k\}$ be a set \emph{k} colours. Each point $p_i$ is coloured by one of the colours of the set $\{c_1, c_2, . . . , c_k\}$. A subset $C \subseteq P$ is called a consistent subset of $P$ if and only if for each point $v \in V\backslash C$, its set of nearest neighbours in $C$ contains at least one point of the same colour as that of $v$. 

Let \emph{G=(V, E)} be a graph where \emph{V} and \emph{E} be the vertex set and edge set, respectively. The distance between a pair of vertices $u$ and $v$ of a graph \emph{G=(V, E)} is the number of edges in the shortest path from $u$ to $v$, and the length of the shortest path will be referred to as \emph{hop-distance(u, v)}. For any set $C\subseteq V$ and a vertex $v\in V\backslash C$, the \emph{nearest neighbours} of \emph{v} in \emph{C} is the nearest vertices in \emph{C} with respect to \emph{hop-distances}. The definition of \emph{MCS} for graphs is as follows.

\begin{center}
\begin{tabular}{ | m{15cm} |} 
  \hline
   \begin{center} Let \emph{G=(V, E)} be a graph and each vertex of \emph{V} is coloured by one of the \emph{k} colours of the colour set $\{c_1, c_2, \dots, c_k\}$. We classified \emph{V} into $k$ classes, namely $V1$, $V_2$, . . . , $V_k$ where all the vertices of each class $V_i$ have a colour from the set $\{c_1, c_2, \dots, c_k\}$, and  no two $V_i$ and $V_j$ ($i\ne j$) have same colour vertices. The objective is to choose subsets $V_i^{'} \subseteq V_i$, $i = 1, . . . , k$ such that for each member $v \in V$, if $v \in V_i$ then among its nearest neighbours in $\cup_{i=1}^{k}V_i^{'}$ there is a vertex $u$ of $V_i^{'}$, and $\sum_{i=1}^{k} \lvert V_i^{'} \rvert$ is minimum. In other words, for a graph \emph{G=(V, E)}, a subset $C\subseteq V$ is called a \emph{minimum consistent subset} if, for any vertex $v\in V\backslash C$, the set of nearest neighbours of $v$ in $C$ has a vertex $u$ of the same colour as that of $v$ and $\lvert C \rvert$ is minimum. Such a vertex $u$ is called the \emph{covering vertex} of $v$. In other words, we say that $v$ is covered by $u$.\end{center}  \\ 
   \hline 

   \end{tabular}
\end{center}

\subsection{Background}
Hart \cite{Hart} first introduced the main idea for the geometric variation of a consistent subset problem. Later, it is also introduced to find a minimum size consistent subset for a point set on a plane. Wilfong \cite{Wilfong} and \cite{Wilfong1} showed that the MCS problem is NP-complete even for three-coloured point sets in $\mathbb{R}^{2}$. The author also proposed an $O(n^2)$ time algorithm for the minimum consistent subset problem with two coloured point sets where one set is a singleton. In \cite{Bodhayan}, it is also proved that the minimum consistent subset problem is NP-hard for bicoloured points in $\mathbb{R}^{2}$.  MCS for collinear point sets in $\mathbb{R}^{2}$ can be found in $O(n^2)$ time \cite{Banerjee}. Recently, Biniaz (\cite{Biniaz}, \cite{Biniaz1}) introduced a sub-exponential time algorithm for the set of points in $\mathbb{R}^{2}$ for the consistent subset problem. In the papers, \cite{Sanjana} and \cite{Anil}, the authors proposed a polynomial-time algorithm for bicoloured paths, cycles, spiders, caterpillars, combs, and trees. Their algorithm for bicoloured paths also works for k-chromatic paths.

\subsection{Our Contribution} MCS is known to be NP-complete for general and even planar graphs. This paper presents (i) a polynomial-time algorithm for the MCS problem for k-chromatic spider graphs when $k$ is a constant and $k\ge 3$. This algorithm runs in $O(n^{k+2})$ time, where $n$ is the number of vertices in the spider. (ii) We also show that MCS remains NP-complete on trees. In the rest of the paper, we will use $C$ to denote a minimum consistent subset of the input graph $G$.

% We will use $C$ to denote the minimum consistent subset of $G$. 

\section{MCS on k-chromatic spider graphs}

Let \emph{G=(V,E)} be a spider graph, where \emph{v} is the center of the spider graph and $\lvert V \rvert=n$. Let $k_1$ be the total number of legs of the spider and the legs are denoted by $L_1, L_2,\dots, L_{k_1}$. $V_i=\{v_{1,i}, v_{2,i}, v_{3,i}, \dots , v_{\lvert V_i \rvert,i}\}$ and $E_i=\{v_{j,i}v_{j+1,i} : j= 1, 2, 3, \dots, \lvert V_i \rvert-1\}$ are the vertex set and edge set of the \emph{i}-th leg of the spider, respectively. The vertex $v_{1,i}$ of the \emph{i}-th leg is connected with the center \emph{v} by an edge $v_{1,i}v$ for $i=1,2,3, \dots, k_1$. Hence the vertex set and the edge set of \emph{G} are $V=\{v\}\bigcup \cup_{i=1}^{k_1}V_i$ and $E=\bigcup_{i=1}^{k_1}(E_i\cup \{v_{1,i}v\})$, respectively. Also, the cardinality of the vertex set and edge set of \emph{G} are $\lvert V \rvert = \sum_{i=1}^{k_1} \lvert V_i \rvert + 1= n$ and $\lvert E \rvert= \cup_{i=1}^{k_1}\lvert E_i \rvert+k_1= \sum_{i=1}^{k_1} \lvert V_i \rvert = n-1$, respectively. Each vertex of \emph{V} is coloured by one of the colours of the set $\{c_1, c_2,\dots, c_k\}$. Hence, the \emph{k-chromatic} spider graph is defined.

A consecutive set of vertices of the same colour on a leg is called a \emph{run}. For $L_i$ (\emph{i}-th leg), the $run$ connected to $v$ is called the \emph{first run}, and it is denoted by $\rho_i$, and the subsequent runs are $R_{2,i}, R_{3,i}, \dots$. We denote $C(u)$ as a minimum consistent subset of $G$ among all possible consistent subsets that contain a vertex \emph{u}. If $p$ is a vertex of the leg $L_i$, then we write $leg(p)=L_i$. For a \emph{first run} $\rho_i$ of the leg $L_i$, we write \emph{$L_i$ = leg($\rho_i$)}. $V(\rho_i)$ is the set of vertices of a leg whose \emph{first run} is $\rho_i$. The colour of the vertices of any run $R_{j,i}$ is denoted by \emph{col($R_{j,i}$)}, and it satisfies \emph{col($R_{j,i}$) $\in \{c_1, c_2, c_3, \dots, c_k\}$}. Also, the colour of a vertex $p$ is denoted by \emph{col(p)}, and it satisfies \emph{col(p) $\in \{c_1, c_2, c_3, \dots, c_k\}$}. We use $C\subseteq V$ as a \emph{minimum consistent subset} of the spider graph \emph{G}.

\begin{lemma} \label{path}
    The run time to find out a minimum consistent subset of the k-chromatic path of n vertices is $O(n)$.
\end{lemma}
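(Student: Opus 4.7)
The plan is to construct a dynamic program that sweeps the path from left to right, exploiting the run structure to achieve linear time. Label the vertices in path order as $v_1, \ldots, v_n$ and decompose the path into maximal monochromatic runs $R_1, \ldots, R_m$. A structural observation underlies everything: every run must contain at least one selected vertex in any consistent subset, for otherwise an interior vertex of an unrepresented run would have all its nearest $C$-neighbours in flanking runs of different colours. In particular, the two selected vertices flanking any non-selected vertex must belong to the same run or to two adjacent runs.

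Next, define $dp[i]$ to be the minimum size of a consistent subset of $\{v_1, \ldots, v_i\}$ containing $v_i$ as its rightmost element. I would set $dp[i] = 1$ whenever $v_i \in R_1$, since then the singleton $\{v_i\}$ already covers the monochromatic prefix. Otherwise use the recurrence
\[
dp[i] \;=\; 1 + \min\bigl\{\, dp[j] : j < i \text{ and the pair } (v_j, v_i) \text{ is compatible}\,\bigr\},
\]
where compatibility of $(v_j, v_i)$ means that for every $v_\ell$ with $j < \ell < i$, a nearest element of $\{v_j, v_i\}$ (ties allowed) has the same colour as $v_\ell$. The final answer is $\min_{v_i \in R_m} dp[i]$, which accounts for the requirement that the vertices after the rightmost selected vertex remain correctly covered.

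The $O(n)$ running time comes from analysing the compatibility condition precisely. For $v_i \in R_t$, any compatible $v_j$ must lie either in $R_t$ (in which case all intervening vertices already share $col(v_i)$ and compatibility is automatic for every earlier $v_j \in R_t$) or in $R_{t-1}$, in which case the midpoint rule forces $v_j + v_i \in \{2e, 2e+1, 2e+2\}$, where $e$ is the last position of $R_{t-1}$ — at most three cross-run candidates. Maintaining a running prefix minimum of $dp$ inside each run handles the in-run case in $O(1)$ per $i$, and the three cross-run candidates contribute $O(1)$ lookups. Hence each entry $dp[i]$ is computed in amortised constant time, for a total running time of $O(n)$.

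The main technical obstacle is the midpoint analysis for the cross-run case: verifying that exactly those three values of $v_j + v_i$ satisfy the compatibility constraint, and handling the tie case at the midpoint correctly (so that a vertex sitting exactly at $(v_j + v_i)/2$ is accepted as covered whenever its colour matches either endpoint). Once this characterisation is established, the bookkeeping for prefix minima per run and the final minimisation over $R_m$ is routine, yielding the claimed $O(n)$ bound.
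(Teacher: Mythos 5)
The paper does not actually prove this lemma --- its ``proof'' is a one-line citation to \cite{Sanjana}, where the algorithm is the overlay-graph construction (join two vertices when they can be consecutive members of a consistent subset, then take a shortest path), which is conceptually the same dynamic program you set up; so your argument is best read as a correct, self-contained reconstruction rather than a genuinely different route. Your added value is the explicit $O(n)$ accounting: the observation that in-run transitions reduce to a prefix minimum, and that cross-run transitions are pinned down by the midpoint condition $j+i\in\{2e,2e+1,2e+2\}$, is correct (including the tie case) and does yield linear time. The one place where your write-up is weaker than it should be is the structural claim that every run contains a selected vertex: the justification ``all its nearest $C$-neighbours lie in flanking runs of different colours'' is not an argument, because the nearest selected vertices of an unrepresented run $R$ need not lie in the runs adjacent to $R$; a more distant run can share the colour of $R$ and could in principle supply the covering vertex. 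The standard repair looks at the boundary vertex of the adjacent run: if the leftmost vertex $v$ of $R$ is covered by a same-coloured selected vertex $a$ to its left, then $a$ lies beyond the run $R'$ immediately left of $R$, all of $R'$ is unselected, and the rightmost vertex of $R'$ is strictly closer to $a$ than to any selected vertex on its right, yet has a different colour from $a$ --- a contradiction (symmetrically if $v$ is covered from the right). Alternatively, you can bypass the claim entirely and prove directly that a compatible pair $(v_j,v_i)$ must lie in the same or adjacent runs, since the vertices between them must form a block coloured $col(v_j)$ followed by a block coloured $col(v_i)$, which no full intervening run can satisfy. With either fix the proof is complete.
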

\begin{proof}
    The proof of the lemma is in the paper \cite{Sanjana}.
\end{proof}

\begin{lemma}\label{bubai1}
If the center of the spider is not in $C$, then the vertex covering of the center must be in the union of the center and the \emph{first runs}.

\end{lemma}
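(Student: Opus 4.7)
My plan is to analyze the structure forced by the consistency condition when $v \notin C$, starting from the nearest same-colored witness.

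Let $d$ denote the hop-distance from the center $v$ to its nearest neighbour(s) in $C$. Since $v \notin C$, the consistency requirement for $v$ gives a vertex $u \in C$ with $\operatorname{hop-distance}(v,u) = d$ and $col(u) = col(v)$. Say $u$ lies in leg $V_i$ at position $d$ along that leg (counting from the neighbour of $v$). By the minimality of $d$, no vertex at a position strictly less than $d$, on any leg, belongs to $C$. The goal is to show that $u \in \rho_i$, i.e.\ that every vertex at positions $1, 2, \ldots, d-1$ in $V_i$ also has colour $col(v)$, so that the monochromatic first run extends all the way out to $u$.

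For each $1 \le j \le d-1$, let $w_j$ be the vertex at position $j$ in $V_i$. Since $w_j \notin C$, consistency at $w_j$ requires a nearest neighbour of $w_j$ in $C$ with colour $col(w_j)$. I would now enumerate candidates for nearest neighbours of $w_j$ in $C$: (a) vertices of $C$ inside $V_i$ must sit at position $\ge d$, so the closest such is $u$, at hop-distance $d - j$; (b) vertices of $C$ in any other leg $V_\ell$ also sit at position $\ge d$, giving hop-distance at least $j + d$ through the center. Comparing, $j + d > d - j$ for every $j \ge 1$, so the unique nearest neighbour of $w_j$ in $C$ is $u$. Consistency therefore forces $col(w_j) = col(u) = col(v)$.

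Iterating this for $j = 1, \ldots, d-1$ shows that the entire subpath from $v$ to $u$ inside $V_i$ is monochromatic of colour $col(v)$, which by definition places $u$ in the first run $\rho_i$. Thus the vertex covering $v$ belongs to $\bigcup_i \rho_i$, as claimed. I expect the main (minor) obstacle to be the bookkeeping in the distance comparison: one must be careful to rule out ties from other positions in $V_i$ and from vertices on other legs sitting exactly at position $d$, but the strict inequality $j + d > d - j$ for $j \ge 1$ handles every such case uniformly.
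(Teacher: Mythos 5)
Your proposal is correct and follows essentially the same argument as the paper: both proofs observe that every vertex on the path from $v$ to the covering vertex $u$ has $u$ as its unique nearest neighbour in $C$ (since any other candidate in $C$ lies at distance at least $j+d > d-j$), so consistency forces the whole path to be monochromatic and $u$ to lie in a first run. The only cosmetic difference is that the paper phrases this as a contradiction (assuming some $u_i$ on the path has a different colour), whereas you argue directly vertex by vertex.
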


\begin{figure}
\centering
\includegraphics[width=.9\textwidth]{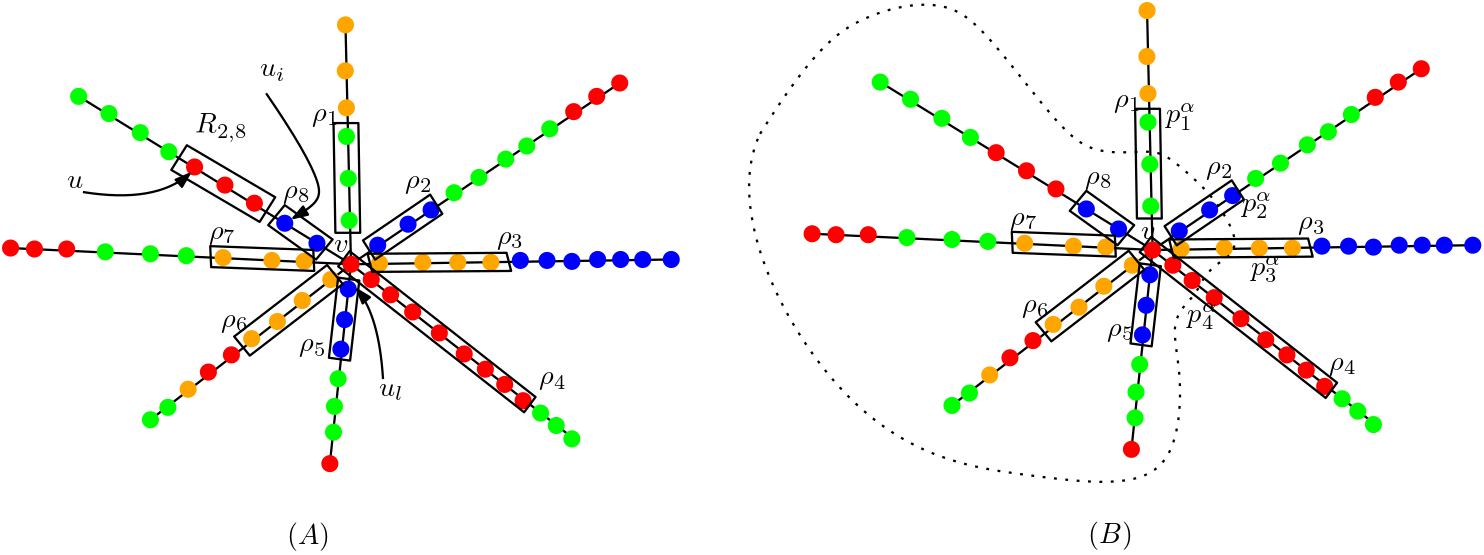}
\caption{\label{fig:case3A}(A) In this figure $k=8$, $j=2$ and $l=5$. (B) Here $k_2=k=4$ and $\psi = \{c_1 =$ $'green'$ $, c_2 = $ $'blue',$ $ c_3 = $ $'orange', $ $ c_4 = $ $ 'red'\}$ is the complete list of $k$ colours. $S$=$\{\rho_1, \rho_2, \rho_3, \rho_4\}$ such that, $\cup_{\rho_i \in S}col(\rho_i)$= $\psi$. $\alpha \in \{1,2, \dots, $min\{length of each \emph{first run} of $S \}$=\{1,2,3\}.}
\end{figure}

\begin{proof}

The \emph{minimum consistent subset} \emph{C} does not contain the center of the spider \emph{v}, so \emph{v} must be covered by a vertex of \emph{C}. Let $u\in C$ be the vertex of the \emph{j}-th run $R_{j,k}$ ($j\ge 2$) of the leg $L_k$ such that \emph{u} covers \emph{v} (Fig.\ref{fig:case3A}(A)). As \emph{u} covers \emph{v} and \emph{u} is a vertex of the \emph{run} $R_{j,k}$, then \emph{col(v) = col(u) = col($R_{j,k}$)}. Let $U$ be the path from \emph{v} to \emph{u}, excluding \emph{v} and \emph{u}. As \emph{col(v) = col(u) = col($R_{j,k}$)} and $R_{j,k}$ is not the \emph{first run}, then it is obvious that there must have been at least one \emph{run} $R_{i,k}$ on the path \emph{U} such that \emph{col($R_{j,k})\ne$ col($R_{i,k}$)}. Let $u_i$ be a vertex of the \emph{run} $R_{i,k}$ as well a vertex of \emph{U}. As \emph{v} is covered by \emph{u}; then it implies that \emph{u} is the nearest vertex of \emph{v} in \emph{C}. As \emph{u} is the nearest vertex of \emph{v} in \emph{C}, then \emph{u} is the nearest vertex of all the vertices of \emph{U} in \emph{C}. Hence, all the vertices of \emph{U} must be covered by \emph{u}. As $u_i$ is a vertex of \emph{U}, then $u_i$ is also covered by \emph{u}, but $col(u_i)\ne col(u)$, a contraction.

\end{proof}

Depending on the situation, we have a total of three situations:

(1) The colour of $v$ differs from the colour of each $\rho_i$ for $i=1, 2,\dots, k_1 $.

(2) $col(\rho_i)$ is of the same colour as that of $v$ for $i = 1, 2, \dots , k_1$ . 

(3) $col(v) \in \cup_{i=1}^{k_1} col(\rho_i)$ such that $2\leq \lvert \cup_{i=1}^{k_1} col(\rho_i) \rvert \leq k$.

\subsection{Case 1}\label{case1}

 We have \emph{col(v) $\ne col(\rho_i)$ } for $i=1, 2,\dots, k_1$ (Fig.\ref{fig:case1}(A)). The center \emph{v} must be included in \emph{C}; otherwise, the vertex covering of \emph{v} must be a vertex of a \emph{first run} $\rho_i$ using lemma \ref{bubai1}, but $col(\rho_i)\ne col(v)$ for $i=1,2,\dots, k_1$. So,  We compute the minimum consistent subset $C_i$ using the path graph algorithm using the lemma \ref{path} for each path $L_i$ $\cup \{v\}$ by assuming $v\in C_i$. Later, we remove $v$ from $C_i$ and compute the minimum consistent $C=$ $ \cup_{i=1}^{k_1} $ $ C_i \bigcup \{v\}$. We set $\chi = \sum_{i=1}^{k_1} \lvert C_i \rvert +1$. Lemma \ref{path} takes $O(\lvert V_i \rvert+1)$ time to execute each $C_i$. Hence, the total execution time is $O(n)$.

\begin{figure}
\centering
\includegraphics[width=.9\textwidth]{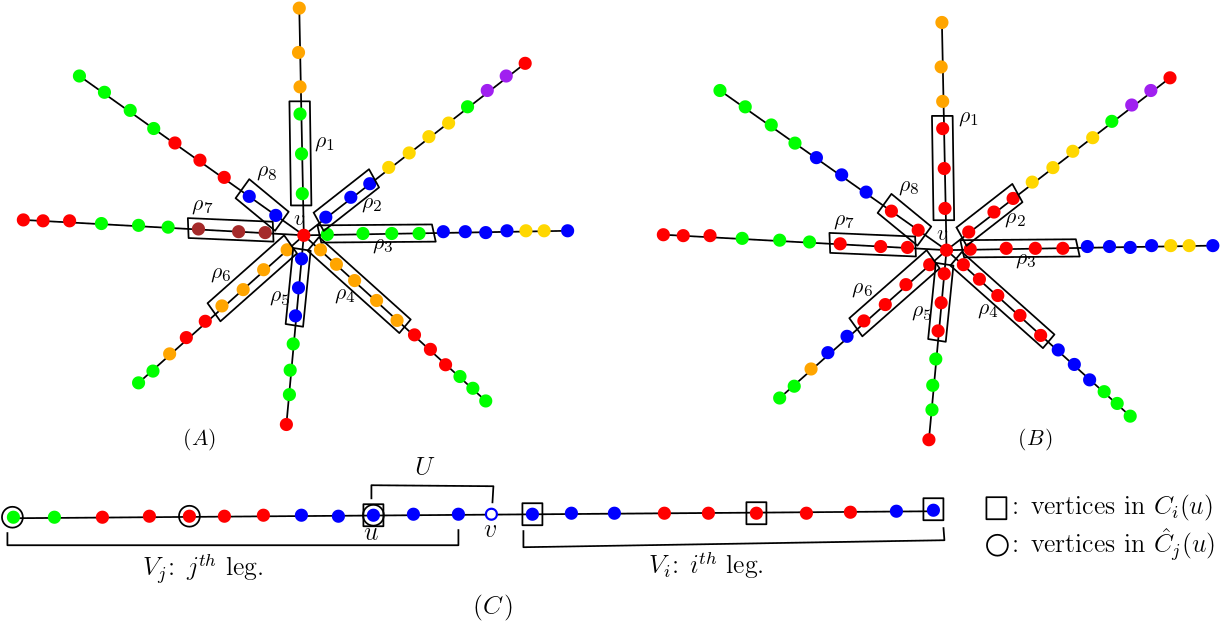}
\caption{\label{fig:case1} (A) $col(v)$ is different from each $\rho_i$ for $i=1,2,\dots,8$. Here it is obvious that $v\in C$; otherwise, the consistency property for $v$ is not satisfied. (B) $col(v)$ =$col(\rho_i)$ for $i=1,2, \dots, 8$. Here the vertex $v$ may or may be in $C$. (C) $C_i(u)$: optimum solution for $L_i \cup U$, and $\hat C_{j}(u)$: optimum solution for $L_j\backslash U \cup \{u\}$.}
\end{figure}

\subsection{Case 2}\label{case2}
 Without loss of generality, let \emph{col(v)=col($\rho_i$)= $c_j$} for all $i=1,2,\dots, k_1$ (Fig.\ref{fig:case1}(B)). We initialize $C$ and $\chi$ using the previous case \ref{case1}; i.e.,$\chi = \sum_{i=1}^{k_1} \lvert C_i \rvert +1$. Next, we test whether the size of the minimum consistent subset can be improved if \emph{v} is not chosen in \emph{C}.

 Let $u \in \cup_{j=1}^{k_1}\rho_j$ such that $u\in \rho_j$, and $U$ be the path segment from $v$ to $u$. We use (i) $C_i(u)$ to denote the minimum sized consistent subset of $L_i \cup U$ where \emph{u} is included in \emph{$C_i(u)$}, and it is the vertex in $C_i(u)$ closest to \emph{v}, and (ii) $\hat C_j(u)$ is the minimum sized consistent subset of the path $(L_j \backslash U)\cup \{u\}$ that includes $u$ (Fig.\ref{fig:case1}(C)).

\begin{lemma}
    If $v\notin C$, then there exists at least one vertex $u \in \cup_{i=1}^{k_{1}} \rho_i$ in $C$. If the vertex $u\in \rho_j$ belongs to $C$ and is closest to $v$ with respect to hop-distance, then $C= \hat C_{j}\bigcup (\cup_{i=1,2,\dots,k_1, i\ne j}$ $C_i(u))$.
\end{lemma}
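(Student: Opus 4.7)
The plan is to prove the lemma in two directions: show that any optimal $C$ with $v \notin C$ has size at least that of the proposed union, and show that the proposed union is itself a consistent subset of the spider of that size. The first assertion of the lemma is immediate from Lemma \ref{bubai1}: if $v \notin C$, the vertex covering $v$ must lie in some first run, and in Case 2 every first run has colour $col(v)$, so such a $u \in \cup_i \rho_i \cap C$ exists. Fix $u \in \rho_j \cap C$ as the vertex closest to $v$ with respect to hop-distance.

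For the forward direction, I would first observe that no vertex strictly between $v$ and $u$ on the path $U$ can lie in $C$, since such a vertex would be closer to $v$ than $u$; hence $C \cap U = \{u\}$. Next, for each $i \neq j$ I claim $(C \cap V_i) \cup \{u\}$ is a consistent subset of the path $V_i \cup U$ in which $u$ is the vertex closest to $v$, and that $C \cap ((V_j \setminus U) \cup \{u\})$ is a consistent subset of the path $(V_j \setminus U) \cup \{u\}$ containing $u$. The driving observation is that for any vertex $w$ on one of these sub-paths, its nearest $C$-vertex in the spider either lies on that same sub-path, or is reached through $v$ at distance at least $d(w, v) + d(v, u)$, which equals the distance to $u$ along the sub-path; consequently the restricted nearest-neighbour set contains a same-coloured vertex. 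By the definitions of $C_i(u)$ and $\hat{C}_j$, this gives $|C \cap V_i| \geq |C_i(u)| - 1$ for $i \neq j$ and $|C \cap (V_j \setminus U)| \geq |\hat{C}_j| - 1$, and summing (with $u$ counted only once across the sub-paths) yields $|C| \geq |\hat{C}_j \cup \bigcup_{i \neq j} C_i(u)|$.

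For the reverse direction, I would verify that $S := \hat{C}_j \cup \bigcup_{i \neq j} C_i(u)$ is a consistent subset of the whole spider. The vertex $v$ is covered by $u$, which shares its colour. For $w$ in $V_i$ with $i \neq j$, or in $U \setminus \{v\}$, consistency in the path $V_i \cup U$ under $C_i(u)$ lifts to the spider under $S$, because any $S$-vertex outside this sub-path must be reached through $v$, incurring distance at least $d(w, v) + d(v, u) \geq d(w, u)$, so it cannot strictly improve on the in-path nearest neighbour; the symmetric argument handles $w \in V_j \setminus U$ via $\hat{C}_j$. Combining both bounds gives the claimed equality.

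The step I expect to require the most care is the cross-leg nearest-neighbour analysis at distance exactly $d(v, u)$, since in Case 2 the colour of $v$ coincides with the colour of every first run and ties can therefore occur at several legs simultaneously; one must check that no such tie violates consistency of $S$. This reduces to the fact that $u$ itself realises the minimum distance to $v$ and matches its colour, while the definition of $C_i(u)$ already forces $u$ to be the closest $C_i(u)$-vertex to $v$ along each sub-path, so no competing vertex on another leg can appear strictly closer to any $w$ than the in-path nearest neighbour certified by the sub-problem.
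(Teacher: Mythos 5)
Your overall strategy is genuinely different from the paper's: the paper's proof of this lemma is essentially a description of how to \emph{compute} each $C_i(u)$ via the overlay graph of the cited prior work, and it asserts the decomposition rather than arguing optimality in both directions. Your two-direction argument (restriction of an optimal $C$ to the sub-paths, plus consistency of the assembled union) is the more rigorous route, and your reverse direction is sound: since every vertex of $S$ other than $u$ lies at distance at least $d(v,u)$ from $v$, no cross-leg vertex of $S$ can be \emph{strictly} closer to a vertex $w$ than its in-sub-path nearest neighbour, and ties only enlarge the nearest-neighbour set without removing the same-coloured witness.

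However, there is a genuine gap in your forward direction, at exactly the tie phenomenon you flag but resolve only for the reverse direction. You claim that for $w\in V_i$, the nearest-$C$-vertex set either lies on the sub-path $V_i\cup U$ or is reached through $v$ at distance at least $d(w,v)+d(v,u)=d(w,u)$, and conclude that $(C\cap V_i)\cup\{u\}$ is consistent for the sub-path. The conclusion does not follow when the \emph{only} same-coloured nearest neighbour of $w$ in $C$ is a vertex $z$ on a third leg with $d(v,z)=d(v,u)$ but $col(z)\neq col(u)$. This is not excluded in Case~2: all first runs share $col(v)$, but $z$ may lie \emph{beyond} the first run of its leg (i.e. $d(v,u)>\lvert\rho_{i''}\rvert$) and so may carry any colour, while still being a closest vertex of $C$ to $v$ alongside $u$. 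In that situation $w$ is consistent in the spider (witnessed by $z$ at tie distance) yet has no same-coloured nearest neighbour in $(C\cap V_i)\cup\{u\}$, so the inequality $\lvert C\cap V_i\rvert\geq\lvert C_i(u)\rvert-1$ is not established and the bound $\lvert C\rvert\geq\lvert \hat C_j\cup\bigcup_{i\neq j}C_i(u)\rvert$ does not follow. To close the gap you must either rule out such cross-leg sharing (e.g.\ by an exchange argument replacing $z$'s role for $w$ with a vertex on $V_i$ without increasing $\lvert C\rvert$), or weaken the decomposition to an inequality in one direction and argue minimality differently; note that Lemma~\ref{bubai1} constrains only the vertex covering $v$ itself, not the witnesses of other vertices, so it does not help here.
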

\begin{proof}

We first compute $C_i(u)$ the lemma \ref{path} for the path $L_i\cup U$. $C_i(u)$ may contain another vertex $w\in \rho_i$ (in addition to $u$). In such a case, as $u$ is closest to $v$ among the vertices in $C_i(u)$, $w$ must satisfy hop-distance$(v, u) \le$ hop distance$(w, u)$. While constructing the \emph{overlay graph} (note that the \emph{overlay graph} is defined in the paper \cite{Sanjana}), we add the edges $(w, u)$ for $w\in \rho_i$ satisfying hop-distance$(v, u) \le$ hop-distance$(w, u)$. If there exists any edge $(w, u)$ with $col(w)= col(u)$ (from the run of $L_i$ adjacent to $\rho_i$ to the vertex $u$), we must have hop-distance$(v, u) \le$ hop-distance$(w, u)$ (for the consistency of vertex $v$ in $C_i(u)$). Needless to say, no edge $(w, u)$ will be present in the graph for $w\in U$. Now, the shortest path in this overlay graph will produce a minimum size consistent subset $C_i(u)$ for $L_i\cup U$. Remaining $\hat C_j(u)$ gives the minimum consistent subset for $L_i\backslash U$.
\end{proof}

The run time to find out such $u$ nearest to $v$ in the \emph{first runs} such that $u\in C$ is $O(n)$, and the execution time to calculate each $C_i(u)$ takes at most $O(n)$ for each \emph{i}. Also, the execution time to calculate $\hat C_j(u)$ is $O(n)$. Hence the total time to execute the whole process takes $O(n^2)$ time.

\subsection{Case 3}\label{case3}

Without loss of generality, we assume that $\cup_{i=1}^{k_1} col(\rho_i)=\{c_1, c_2, \dots, c_{k_2}\}$, where $2\leq k_2 \leq k$, and $col(v)={c_1}$. If $k_2=1$, then the center $v$ and all the \emph{first runs} $\rho_i$ for $i=1,2,\dots, k_2$ have the same colour, which is already discussed in subsection \ref{case2}. Depending on the value of $k_2$ we have two cases: 

(1) $k_2=k$, which means vertices of all $k$ colours are available in $\cup_{i=1}^{k_1} col(\rho_i)$.

(2) $2 \leq k_2 < k$, which means that not all the $k$ colours are available in $\cup_{i=1}^{k_1} col(\rho_i)$.

\textbf{Subcase 1:} It is obvious that $k_1 \geq k$ because there are a total of $k$ colours available in $k_1$ \emph{first runs} (Fig.\ref{fig:case3A}(B)). Without loss of generality, let the first $k$ runs be of different colours, that is $col(\rho_i)$=$c_i$ for $i=1, 2, \dots, k$, and $\cup_{i=1}^{k} col(\rho_i)=\{c_1, c_2, \dots, c_{k}\}$. We take $k$ vertices $\{p_i : p_i\in \rho_i$ for $i=1,2,\dots, k\}$ in $C$ in such a way that \emph{hop-distance$(p_1, v)$ = hop-distance$(p_2, v)$= \dots = hop-distance$(p_{k}, v)$ = $d_{\alpha}$} (we say the equal distance is $d_{\alpha}$), then all the vertices in $V \backslash \{V_1 \cup V_2 \cup \dots \cup V_{k}\}$ are covered with respect to their consistency. We must add the minimum consistent subsets $\hat C_i(p_i)$ for $L_i\backslash U_i$ (note that, $U_i$ is the path from $p_i$ to $v$) that includes $p_i$ for $i=1,2,\dots,k$. The size of the consistent subset $\chi$ will be updated if the existing $\chi > \sum_{i=1}^{k} \lvert \hat C_i(p_i) \rvert$. To find out $\hat C_i(u)$, we need to use case \ref{case2}, where $u$ is given. We will repeat the process for each such distance $d_{\alpha}$ $ \in \{1, 2, . . . $ $,   min\{\lvert \rho_1 \rvert \, \lvert \rho_2 \rvert \, \dots, \lvert \rho_k \rvert \}\}$. A single execution of the algorithm for the path (lemma \ref{path}) returns the size of all the consistent subsets of $L_i\backslash U_i$ with every vertex $p_{\alpha}$, $\alpha \in \{1, 2, . . . $ $,   min\{\lvert \rho_1 \rvert \, \lvert \rho_2 \rvert \, \dots, \lvert \rho_k \rvert \}$ as its element that is closest to v for all $i=1,2,\dots, k$. We run the algorithm of lemma \ref{path} for every leg $L_i$ ($i=1,2,\dots, k_2$) of the spider, and the total time needed is $O(\lvert V \rvert)$=$O(n)$. While choosing $d_{\alpha} \in \{1,2,\dots, \lvert \rho_i \rvert \}$ we take total time $O(n)$.

Now, $\chi$ may not be a minimum just because we took a particular set of $k$ \emph{first runs}, which satisfies $\cup_{i=1}^{k} col(\rho_i)=\{c_1, c_2, \dots, c_{k}\}$. So, we have to take all the sets of $k$ \emph{first runs} whose colour set is $\{c_1, c_2, \dots, c_{k}\}$. So, we find all possible $k$ \emph{first runs} from $k_1$ \emph{first runs} whose colour set is $\{c_1, c_2, \dots, c_{k}\}$, and then we repeat the above procedure. Finally, we get the minimum $\chi$, and the corresponding set is our required minimum consistent subset, which executes in the time $O($ $\binom{k_1}{k}$$*n*n)$ $ \approx O(n^{k+2})$ (note that the number of vertices $n$ is always greater than or equal to the number of legs).  We explain the whole procedure in algorithm \ref{alg:subcase01}:

\begin{algorithm}
\caption{Algorithm For Subcase 1}\label{alg:subcase01}
\KwData{$|V|=n$, $k_1$ legs $\{L_1, L_2, \dots, L_{k_1}\}$ and $k_1$ \emph{first runs} $\{\rho_1, \rho_2,\dots, \rho_{k_1}\}$, a total of $k$ colours $\psi=\{c_1, c_2,\dots, c_k\}$, $\hat C_i(p_{\alpha})$ is defined in \ref{case2}. $C=$ $ \cup_{i=1}^{k_1} $ $ C_i \bigcup \{v\}$ and $\chi = \sum_{i=1}^{k_1} \lvert C_i \rvert +1$ are taken from case \ref{case1}}
\KwResult{Minimum Consistent Subset $C$}

\For{$S$= $k$ \emph{first runs} from $k_1$ \emph{first runs}}
{
  \If{$\cup_{\rho_i\in S} $ $col(\rho_i)$=$\psi$}
    {
      \For{ each $\alpha \in \{1,2, \dots, $min\{length of each \emph{first run} of $S \}\}$}
      {
        Take vertex $p_{\alpha}$ from each $\rho_i$ of $S$ such that hop-distance($p_{\alpha},v)$ are all equal for every $i$;

        Calculate $\hat C_i(p_{\alpha})$ for all $k$ \emph{first runs} of $S$ from \ref{path};

        $C_{1}$= Union of $\hat C_i(p_{\alpha})$ for all $k$ \emph{first runs} of $S$;
        
        $\chi_{1}$= $|C_1|$;         
        
        \eIf{$\chi > \chi_{1}$ }
         {
         $C$= $C_1$;
         
         $\chi$= $\chi_1$;
         }
          {
           $C$= $C$;
           
           $\chi$=$\chi$;
          }  
        
      }
    }

}

\end{algorithm}

\textbf{Subcase 2:} In this case, not all the $k$ colours are available in $\cup_{i=1}^{k_1} col(\rho_i)$. We assume $\psi=\{c_1, c_2, \dots, c_{k_2}\}$, and $col(v)\in \psi$.  We have $\cup_{i=1}^{k_1} col(\rho_i)=\psi$, where $2\leq k_2 < k$. Depending on the situation, we have two cases:

(A) Without loss of generality, there exists $k_2$ \emph{first runs} $\rho_1$, $\rho_2$, \dots, $\rho_{k_2}$ for which $\cup_{i=1}^{k_2}$ $ col(\rho_i) = \psi$ such that for every vertex $p \in V\backslash $ $(\cup_{i=1}^{k_2}V_{i})$, the colour of the vertex $p$ satisfies $col(p)\in \psi$.

(B) Without loss of generality, there does not exist $k_2$ \emph{first runs} $\rho_1$, $\rho_2$, \dots, $\rho_{k_2}$ for which $\cup_{i=1}^{k_2} col(\rho_{i})$ = $\psi$ such that for every vertex $p \in V\backslash (\cup_{i=1}^{k_2}V_i)$, the colour of the vertex $p$ satisfies $col(p) \in \psi$

\begin{figure}
\centering
\includegraphics[width=.9\textwidth]{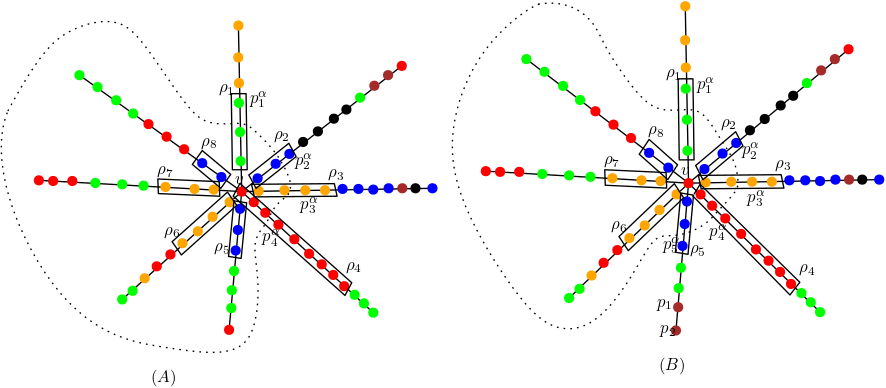}
\caption{\label{fig:case3B1}(A): Here $k=6$, $k_2=4$ and $\{c_1 =$ $'green'$ $, c_2 = $ $'blue',$ $ c_3 = $ $'orange', $ $ c_4 = $ $ 'red' $  $c_5 = $ $'brown'$ $c_6 = $ $'black'\}$ is the complete list of $k$ colours. $\psi = \{c_1 =$ $'green'$ $, c_2 = $ $'blue',$ $ c_3 = $ $'orange', $ $ c_4 = $ $ 'red'\}$ is the complete list of $k_2$ colours. $S$=$\{\rho_1, \rho_2, \rho_3, \rho_4\}$ such that, $\cup_{\rho_i \in S}col(\rho_i)$= $\psi$ and there does not exist any $p \in V\backslash (\cup_{\rho_i \in S} V(\rho_i))$ for which $col(p) \notin \psi$. (B): Here $k=6$, $k_2=4$ and $\{c_1 =$ $'green'$ $, c_2 = $ $'blue',$ $ c_3 = $ $'orange', $ $ c_4 = $ $ 'red' ,$  $c_5 = $ $'brown',$ $c_6 = $ $'black'\}$ is the complete list of $k$ colours. $\psi = \{c_1 =$ $'green'$ $, c_2 = $ $'blue',$ $ c_3 = $ $'orange', $ $ c_4 = $ $ 'red'\}$ is the complete list of $k_2$ colours. $S$=$\{\rho_1, \rho_2, \rho_3, \rho_4\}$ such that, $\cup_{\rho_i \in S}col(\rho_i)$= $\psi$ and there exist at least one $p_1 \in V\backslash (\cup_{\rho_i \in S} V(\rho_i))$ for which $col(p_1) \notin \psi$. $\phi_1 = \{c_2\}$, $\phi_2 = \{c_1, c_3, c_4\}$, $S_1=\{\rho_5\}$, $S_2=\{\rho_1, \rho_3, \rho_4\}$, $S_3 =\{\rho_2\}$.}
\end{figure}

\textbf{Subcase 2A:} We have $\cup_{i=1}^{k_2}$ $ col(\rho_i) = \psi$. We assume col($\rho_i$)=$c_i$ for $i=1, 2, \dots, k_2$. We take the vertices $\{p_i : p_i\in \rho_i$ for $i=1,2,\dots, k_2\}$ in $C$ in such a way that hop-distance$(p_1, v)$ = hop-distance$(p_2, v)$= \dots = hop-distance $(p_{k_2}, v)$ = $d_{\alpha}$, then all the vertices in $V \backslash (V_1 \cup V_2 \cup \dots \cup V_{k_2})$ are covered with respect to their consistency (Fig.\ref{fig:case3B1}(A)). We repeat the above \textbf{subcase 1}, and we update $C$ if we get the smallest size \emph{consistent subset}. We repeat the process by taking every possible $k_2$ \emph{first runs}, which satisfies the conditions of this case. We explain the procedure of this case in the algorithm \ref{alg:subcase02}. Hence, we get a minimum consistent subset $C$ with the time complexity $O(n^{k+2})$.

\begin{algorithm}
\caption{Algorithm For Subcase 2A}\label{alg:subcase02}
\KwData{$|V|=n$, $k_1$ legs $\{L_1, L_2, \dots, L_{k_1}\}$ and $k_1$ \emph{first runs} $\{\rho_1, \rho_2,\dots, \rho_{k_1}\}$, a total of $k$ colours $\psi=\{c_1, c_2,\dots, c_{k_2}\}$ with $2 \leq k_2< k$, $\hat C_i(p_{\alpha})$ is defined in \ref{case2}. $C=$ $ \cup_{i=1}^{k_1} $ $ C_i \bigcup \{v\}$ and $\chi = \sum_{i=1}^{k_1} \lvert C_i \rvert +1$ are taken from case \ref{case1}}
\KwResult{Minimum Consistent Subset $C$}

Condition 1: There does not exist any $p \in V\backslash \cup_{\rho_i \in S} V(\rho_i)$ for which $col(p) \notin \psi$

\For{$S$= $k_2$ \emph{first runs} from $k_1$  \emph{first runs} which satisfies $\cup_{\rho_i\in S} $ $col(\rho_i)$=$\psi$ and the condition 1}
{
  
      \For{ each $\alpha \in \{1,2, \dots, $min\{length of each \emph{first run} of $S \}\}$}
      {
        Take vertex $p_{\alpha}$ from each $\rho_i$ of $S$ such that hop-distance($p_{\alpha},v)$ are all equal for every $i$;

        Calculate $\hat C_i(p_{\alpha})$ for all $i$ from \ref{path};

        $C_{1}$= Union of $\hat C_i(p_{\alpha})$ for all $i$;
        
        $\chi_{1}$= $|C_1|$;         
        
        \eIf{$\chi > \chi_{1}$ }
         {
         $C$= $C_1$;
         
         $\chi$= $\chi_1$;
         }
          {
           $C$= $C$;
           
           $\chi$=$\chi$;
          }  
        
      }

}

\end{algorithm}

\textbf{Subcase 2B:} We have $\cup_{i=1}^{k_1}$ $ col(\rho_i) = \{c_1, c_2,\dots, c_{k_2}\}= \psi$. According to this sub-case, if we take any $k_2$ \emph{first runs} in the set $S= \{\rho_1$, $\rho_2$, \dots, $\rho_{k_2}\}$ from $k_1$ \emph{first runs} $\rho_1$, $\rho_2$, \dots, $\rho_{k_1}$ such that $\cup_{i=1}^{k_2} col(\rho_i)$ = $\{c_1, c_2,\dots, c_{k_2}\}$, then there exists at least one $p \in V\backslash (\cup_{i=1}^{k_2}V_i)$ for which $col(p) \notin \{c_1, c_2, $ $\dots$$, c_{k_2}\}$ (Fig.\ref{fig:case3B1}(B)). Without loss of generality, assume that col($\rho_i$)=$c_i$ for $i=1, 2, \dots, k_2$.

\begin{algorithm}
\caption{Algorithm For Subcase 2B}\label{alg:subcase03}
\KwData{$|V|=n$, $k_1$ legs $\{L_1, L_2, \dots, L_{k_1}\}$ and corresponding vertex set $\{V_1, V_2, \dots, V_{k_1}\}$ and $k_1$ \emph{first runs} $\{\rho_1, \rho_2,\dots, \rho_{k_1}\}$, a total of $k_2$ colours $\psi=\{c_1, c_2,\dots, c_{k_2}\}$ with $2 \leq k_2< k$, $\hat C_i(p_{\alpha})$ is defined in \ref{case2}. $C=$ $ \cup_{i=1}^{k_1} $ $ C_i \bigcup \{v\}$ and $\chi = \sum_{i=1}^{k_1} \lvert C_i \rvert +1$ are taken from case \ref{case1}}
\KwResult{Minimum Consistent Subset $C$}

Condition 2: There exists at least one $p \in V\backslash (\cup_{\rho_i \in S} V(\rho_i))$ for which $col(p) \notin \psi$

\For{$S$= $k_2$ \emph{first runs} from $k_1$ \emph{first runs} which satisfies condition 2}
{
      $\phi$=$\{ \}$;

      \For{ each $p_j\in V\backslash \{\cup_{\rho_i \in S}V(\rho_i)\}$ which satisfies $col(p_j) \notin \psi$}  
      {
       \If{$p_j$ is the vertex of the leg $L_j$}
       {

       $\phi$=$\phi \cup \{L_j\}$ (Which implies that $L_j$ is an element of $\phi$);
       
       }

      }
      $\psi_{1}$= $\cup_{L_j \in \phi} col(\rho_j)$;
      
      $\psi_2$=$\psi \backslash \psi_{1}$;
      
      $\psi^{'}$=$\psi \backslash \psi_{2}$;

      $S_1$=Set of \emph{first runs} of the leg $L_j$ if $V_j \in \phi$;
      
      $S_2 = S\backslash S_{1}$;

      $S^{'}=S\backslash S_2$;

      $S_3= \{ \}$;
      
      \For{each leg $L_j$ corresponding to each run $\rho_j$ of $S^{'}$}
      {

      \If{ there exists a vertex $q_j\in V_j$ such that col$(q_j) \notin \phi$}
      {

      $S_3=S_3 \cup \{\rho_j\}$;

      }

      }

      $S_0=S_1 \cup S_2 \cup S_3$;
      
      \For{ each $\alpha \in \{1,2,\dots, $min\{length of each \emph{first run} of $S_0$  $\}\}$}
      {
        Take vertex $p_{\alpha}$ from each $\rho_i$ of $S_0$ such that hop-distance($p_{\alpha},v)$ are all equal for every $i$;

        Calculate $\hat C_i(p_{\alpha})$ for all $i$ from \ref{path};

        $C_{1}$= Union of $\hat C_i(p_{\alpha})$ for all $i$;
        
        $\chi_{1}$= $|C_1|$;         
        
        \eIf{$\chi > \chi_{1}$ }
         {
         $C$= $C_1$;
         
         $\chi$= $\chi_1$;
         }
          {
           $C$= $C$;
           
           $\chi$=$\chi$;
          }  
        
      }

}

\end{algorithm}

Without loss of generality, let there be a total of $k_3$ vertices $q_{k_2+1}, q_{k_2+2},\dots ,q_{k_2+k_3}$ in $V\backslash (\cup_{i=1}^{k_2}V_i)$ such that $col(q_{k_2+i}) \notin \{c_1, c_2, \dots, c_{k_2}\}$ for $i=1,2,\dots, k_3$. These vertices $q_{k_2+1}, $ $ q_{k_2+2},\dots$$,q_{k_2+k_3}$ must be the vertices of some legs except the legs $L_1$, $L_2$,\dots, $L_{k_2}$. Without loss of generality, we assume that $\phi=\cup_{i=1}^{k_3} $leg$(q_{k_{2}+i})$=$\{L_{k_2+1}, $ $ L_{k_2+2}, $ $ \dots, $ $ L_{k_2+k_4}\}$ and corresponding \emph{first runs} of $\phi$ is in the set $S_1= \{\rho_{k_2+1}$, $\rho_{k_2+2}$, \dots, $\rho_{k_2+k_4}\}$. Let $\psi_1$ be the set of colours of the \emph{first runs} of \emph{S}, then $\psi_1 = \cup_{i=k_{2}+1}^{k_2+k_4} col(\rho_i)$, therefore  $ \psi_1 \subseteq$  $\{c_1, c_2, $ $\dots, c_{k_2}\}$ because  set of colours of all the \emph{first runs} is $\cup_{i=1}^{k_1} col(\rho_i)$ = $\{c_1, c_2, \dots, c_{k_2}\}$.. Without loss of generality, we assume that $\psi_{1}$= $\{c_1, c_2, $ $\dots, c_{k_5}\}$ $ \subseteq $ $\{c_1, c_2,$ $ \dots, c_{k_2}\}$ where $k_5 \leq k_2$, and $\psi_{2}=\cup_{i=k_5+1}^{k_2} col(\rho_i)$ = $\{c_{k_5+1}, c_{k_5+2},\dots, c_{k_2}\}$. We take \emph{first runs} from \emph{S} in the set $S_2=\{\rho_{k_5+1}, \rho_{k_5+2}, \dots, \rho_{k_2}\}$ whose colour set is $\psi_2$. For the first $k_5$ runs $\rho_1$, $\rho_2$, \dots, $\rho_{k_5}$, if the corresponding legs $L_1$, $L_2$,\dots, $L_{k_5}$ have a vertex $q$ such that col($q$) $\notin \psi$ then put the corresponding \emph{first run} in $S_3$. Mathematically we can say that for at least one vertex $q$ of the leg $L_i$, if col($q$) $\notin \psi$, then $\rho_i\in S_3$ for $i=1, 2, \dots, k_5$. Let $V^{'}=\{V_i$ : $\rho_i \in S_3$ and $leg(\rho_i) $= $L_i \}$.

Now we take $k_4$ vertices in $T_1=\{p_i : p_{k_2+i}\in \rho_{k_2+i} $ for $i=1,2,\dots, k_4 \}$ whose colour set is $\psi_1 =\{c_1, c_2, \dots, c_{k_5}\}$, $k_2-k_5$ vertices in $T_2=\{p_j$ $: p_j$ $\in \rho_{j} $ for $i=k_5+1, k_5+2,\dots, k_2 \}$ whose colour set is $\{c_{k_5+1}, c_{k_5+2}, \dots, c_{k_2}\}$, and $|S_3|$ vertices in $T_3=\{p_l : p_{l}\in \rho_{l} $  such that $   \rho_{l} \in S_3 $ for $i=1,2,\dots, |S_3| \}$ in such a way that hop-distances of the vertices of $T_1 \cup T_2 \cup T_3$ from $v$ are equal. We take each vertex of $T_1 \cup T_2 \cup T_3$ in $C$, then all the vertices of $V\backslash ((\cup_{i=k_2+1}^{k_2+k_4}V_i )\bigcup (\cup_{i=k_5+1}^{k_2}V_i)\bigcup V^{'}$) are covered with respect to their consistency. We set $T=T_1\cup T_2\cup T_3$ and $S_0=S_1\cup S_2\cup S_3$. We repeat the process in such a way that, for every $\alpha \in \{1,2,\dots, $ min $\{$length of the runs of $S_0\}\}$, hop-distances $(p_{\alpha},v)$ are equal. Now, $C_1$= $\cup_{\rho_i \in S_0}\hat C_i(p_{\alpha})$. We already defined $\hat C_i(p_{\alpha})$ in \ref{case2}. We set $\chi_{1}= |C_1|$. If $\chi > \chi_{1}$, then $C=C_1$ and $\chi = \chi_{1}$; otherwise, $C$ remains unchanged.

We repeat this process by taking $k_2$ runs from $k_1$ runs every time and update $C$ if $\chi > \chi_{1}$. We explain the whole procedure in algorithm \ref{alg:subcase03}. The time complexity will surely be $O($$\binom{k_1}{k_2})$ $*(n+ k*k_1 +k_1*n +n*n ))$. We know $k_2 < k$, $k \leq n$, and $k_1 < n$. Hence, $O($$\binom{k_1}{k_2}$ $*(n+ k*k_1 +k_1*n +n*n))$ $\approx O(n^{k+2})$.

After discussing all the cases and sub-cases, it is clear that the run time does not depend on the value of $k$ for the case \ref{case1} and case \ref{case2}; however, the run time for the case \ref{case3} is at most $O(n^{k+2})$ when $k$ is a constant. Hence we get the following theorem.

\begin{theorem}
The proposed algorithms correctly compute $C$ of a k-chromatic spider graph $G$ in $O(n^{k+2})$ time, where $n$ is the number of vertices in $G$ and $k (\ge 3)$ is a constant.
\end{theorem}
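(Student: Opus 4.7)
The plan is to establish the theorem in two parts, correctness and running time, organizing both by the case split already developed in Sections \ref{case1}--\ref{case3}. For correctness, I would first observe that any consistent subset of $G$ either contains the center $v$ or it does not; in the latter case, Lemma \ref{bubai1} forces the vertex of $C$ nearest to $v$ to lie in some first run $\rho_j$ with $col(\rho_j)=col(v)$. Thus the three colour configurations of $(v,\rho_1,\dots,\rho_{k_1})$ enumerated before the cases are exhaustive, and the algorithm may be analyzed case by case.

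For Case \ref{case1}, $v$ is forced into $C$ (otherwise Lemma \ref{bubai1} gives a covering vertex whose colour disagrees with $v$), so the restriction of any optimal $C$ to $V_i\cup\{v\}$ is an MCS of that path containing $v$; applying Lemma \ref{path} leg-by-leg and re-assembling is therefore optimal, with cost $O(n)$. For Case \ref{case2} I would split on whether $v\in C$ (handled by the Case \ref{case1} bound) or $v\notin C$; in the latter, the preceding lemma shows that the optimum decomposes as $\hat C_j(u)\cup\bigcup_{i\ne j} C_i(u)$ for the closest $u\in\rho_j$ to $v$, and taking the minimum over the $O(n)$ candidates $u$ is optimal. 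The overlay-graph constraint hop-distance$(v,u)\le$ hop-distance$(w,u)$ inside each $C_i(u)$ is precisely what guarantees consistency at $v$ under this choice of $u$, so computing each $C_i(u)$ and $\hat C_j(u)$ via Lemma \ref{path} yields cost $O(n^2)$.

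Case \ref{case3} is the main obstacle. Here I would prove an \emph{equidistant gate} structural lemma: if $v\notin C$ then there exists a distance $d_\alpha$ and, for each colour $c\in\psi$ appearing on some first run, a vertex $p\in\rho$ of colour $c$ at hop-distance exactly $d_\alpha$ from $v$, all lying in $C$. The necessity of the common distance follows because the nearest member of $C$ to $v$ determines the covering radius, and any closer vertex of a different colour on another leg would violate consistency at $v$ (the Lemma \ref{bubai1} argument applied on that leg). In Subcase 1 ($k_2=k$) all colours already occur on first runs, so picking one representative per colour suffices; in Subcases 2A and 2B, additional first runs whose legs carry the colours of $\psi\setminus\cup col(\rho_i)$ in range must be pinned to the gate as well, which is exactly the role of $S_1,S_2,S_3$ in the construction. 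A short exchange argument shows that any MCS can be transformed without increase in size into one whose gate vertices take this canonical form, so the enumeration over (i) all subsets of first runs realising the required colour set and (ii) all gate distances $d_\alpha$ is complete; optimality of the per-leg continuation beyond the gate then reduces to Case \ref{case2} via Lemma \ref{path}.

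For the running time in Case \ref{case3}, the number of candidate subsets of first runs covering $\psi$ is at most $\binom{k_1}{k_2}=O(k_1^{k_2})=O(n^k)$; for each, the $O(n)$ choices of $d_\alpha$ and $O(n)$ work per leg via Lemma \ref{path} combine to give $O(n^{k+2})$. Since this dominates the $O(n)$ and $O(n^2)$ bounds from Cases \ref{case1} and \ref{case2}, and since $k$ is constant, the overall running time is $O(n^{k+2})$, completing the theorem. The delicate step I expect to spend the most effort on is the structural lemma underlying the equidistant-gate reduction in Case \ref{case3}, because it must simultaneously rule out mixed-distance optimal configurations and justify that the extra first runs collected in $S_3$ of Subcase 2B really do cover every vertex left outside the chosen legs.
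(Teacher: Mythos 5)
Your proposal follows essentially the same route as the paper: the identical three-way case split on the colours of the centre $v$ and the first runs, the same $C_i(u)$/$\hat C_j(u)$ decomposition for Case~\ref{case2}, and the same enumeration in Case~\ref{case3} over colour-covering subsets of first runs and equidistant gate distances $d_{\alpha}$, yielding the same $O\bigl(\binom{k_1}{k_2}\cdot n\cdot n\bigr)=O(n^{k+2})$ bound. The one place you go beyond the paper is in explicitly proposing (though not carrying out) an exchange-argument lemma showing that an optimal solution can be normalised to have equidistant gate vertices of the required colours --- a step the paper asserts only implicitly --- so your plan demands rigour at exactly the point where the paper's own argument is thinnest.
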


\section{Complexity result}

We reduce from the  MAX-2SAT problem.
We transform any large enough instance $\theta$ of MAX-2SAT into a tree $T$ of size polynomial in the size of $\theta$ such that $k$ clauses of $\theta$ can be satisfied if and only if $T$
has a consistent subset of a certain size that we describe later on.
The constructed tree $T$ is divided into variable and clause gadgets, which are then
connected via paths.
Throughout this section, by distance between two vertices, we refer to the graph theoretic or hop distance between them. The distance between a vertex $v$ and a path is the minimum distance between $v$ and any vertex of the path. The distance between two paths $P_1$ and $P_2$ is the minimum distance of any pair vertices $v_1$, $v_2$, where $v_1 \in P_1$, and $v_2 \in P_2$.
\subsection{Construction of the reduction tree}

Suppose that the given MAX-2SAT formula $\theta$ has $n$ variables $x_1, x_1, \ldots, x_n$ and $m$ clauses $c_1, c_2, \ldots, c_m$, for $n,m \geq 50$. We now describe the parts of its reduction tree $T$.
$\\ \\$
A variable gadget $X_i$ for $x_i$ has the following components (Figure \ref{fig:red}):
\begin{enumerate}
 \item A path $(x_{i,1}, x_{i,2}, \ldots, x_{i,5})$ on five vertices, called the  \emph{left literal path}, or the \emph{positive literal path} of the variable gadget $X_i$.
   A path $(\overline{x_{i,1}}, \overline{x_{i,2}}, \ldots, \overline{x_{i,5}})$ on five vertices, called   the \emph {right literal path}, or the \emph{positive literal path} of the variable gadget $X_i$.
  Each of the two literal paths has a distinct colour.

%  \item A path $(v^{\overline{x_i}}_1, v^{\overline{x_i}}_2, \ldots, v^{\overline{x_i}}_5)$.
 \item Five paths of the form $(v_{i,j,1}, v_{i,j,2}, \ldots, v_{i,j,6})$, $1 \leq j \leq 6$
 on six vertices each, called the \emph{variable paths} of $x_i$.
%  $(v^{L_{i,j}}_1, v^{L_{i,j}}_2, \ldots, v^{L_{i,j}}_7)$ $\forall 1 \leq j \leq 5$ of seven vertices each.
Each of the five variable paths has a distinct colour.
 \item $mn$ pairs of vertices $\{s_{i,1,1}, s_{i,1,2}\}, \ldots, \{s_{i,mn,1}, s_{i,mn,2}\}$
 called the \emph{stabilizer vertices}. The vertices of the form $s_{i,j,1}$ are called the \emph{left stabilizers} while those of the form $s_{i,j,2}$ are called the right stabilizers.
 Each pair of stabilizer vertices has a distinct colour.
\end{enumerate}
These components are connected to each other and $T$ in the following manner.
\begin{enumerate}
 \item The first vertices of the variable paths, namely $v_{i,1,1}, v_{i,2,1}, v_{i,3,1}, v_{i,4,1}, v_{i,5,1}$ form an induced path in that order.
%
%  The vertices $(v^{L_{i,1}}_1, v^{L_{i,2}}_1, \ldots, v^{L_{i,5}}_1)$ form an induced path in that order.
\item All the left stabilizer vertices are adjacent to the firt vertex of the first variable path ($v_{i,1,1}$). All the right stabilizer vertices are adjacent to the first vertex of the last variable path ($v_{i,5,1}$).
%  \item The vertices $v^{J_{i,j}}_1$ $\forall 1 \leq j \leq 3$, and the vertex $v^{x_i}_1$ are
%  all adjacent to $v^{L_{i,1}}_1$.
% \item The vertices $v^{J_{i,j}}_2$ $\forall 1 \leq j \leq 3$, and the vertex $v^{\overline{x_i}}_1$ are
%  all adjacent to $v^{L_{i,5}}_1$.
\item The first vertex of the left literal path is adjacent to the first vertex of the first
variable path. The first vertex of the right literal path is adjacent to the first vertex of the last variable path.
 \item The whole variable gadget $X_i$ is connected to the rest of $T$ via
the first vertex of the third variable path.
 \item
No colour of $X_i$ other than the colours $x_i$ and $\overline{x_i}$ occur
elsewhere in $T$. The occurrences of the colours $x_i$ and $\overline{x_i}$ are at least distance eight away from all vertices of $X_i$.
\end{enumerate}
A clause gadget $C_j$ of the clause $c_j$ has the following components (Figure \ref{fig:red}).
\begin{enumerate}
 \item A  path $(c_{j,1}, \ldots c_{j,17})$ on seventeen vertices, called the \emph{clause path}.
 The clause path has a distinct colour.
 \item Two paths $(o_{j,1,1}, o_{j,1,2}, \ldots, o_{j,1,15})$ and $(o_{j,2,1}, o_{j,2,2}, \ldots, o_{j,2,15})$ on fifteen vertices each, called the \emph{left occurrence path} and the \emph{right occurrence path} respectively.
 The two occurrence paths correspond to the two literals $l_1$ and $l_2$ that occur in $c_j$, and have the same colour as the corresponding literal paths of $l_1$ and $l_2$ in their respective variable gadgets.
\end{enumerate}
The first vertex of the left occurrence path is adjacent to the second vertex of the clause path. The first vertex of the right occurrence path is adjacent to the sixteenth vertex of the clause path. The clause gadget is joined to the rest of $T$ via the ninth vertex of the clause path. Overall, the variable and clause gadgets are joined as follows (Figure \ref{fig:red}).

\begin{enumerate}
 \item The first vertex of the third variable path of each variable gadget is joined to
 a vertex $p_0$ called the \emph{central chain vertex}, via an edge.
 \item The ninth vertex of the clause path of each clause gadget is joined to the central chain vertex via a path of three chain vertices.
 \item Each of the aforementioned chain vertices is made the first vertex of a path on eight
 vertices. These paths are called \emph{chain paths}.
\end{enumerate}
We denote by $\beta$ the total number of chain paths. Observe that $\beta = 3m + 1$.
We denote by $\alpha$ the total number of chain vertices. Observe that $\alpha = 8 \beta$

 \begin{figure}
\centering
\includegraphics[width=0.9\textwidth]{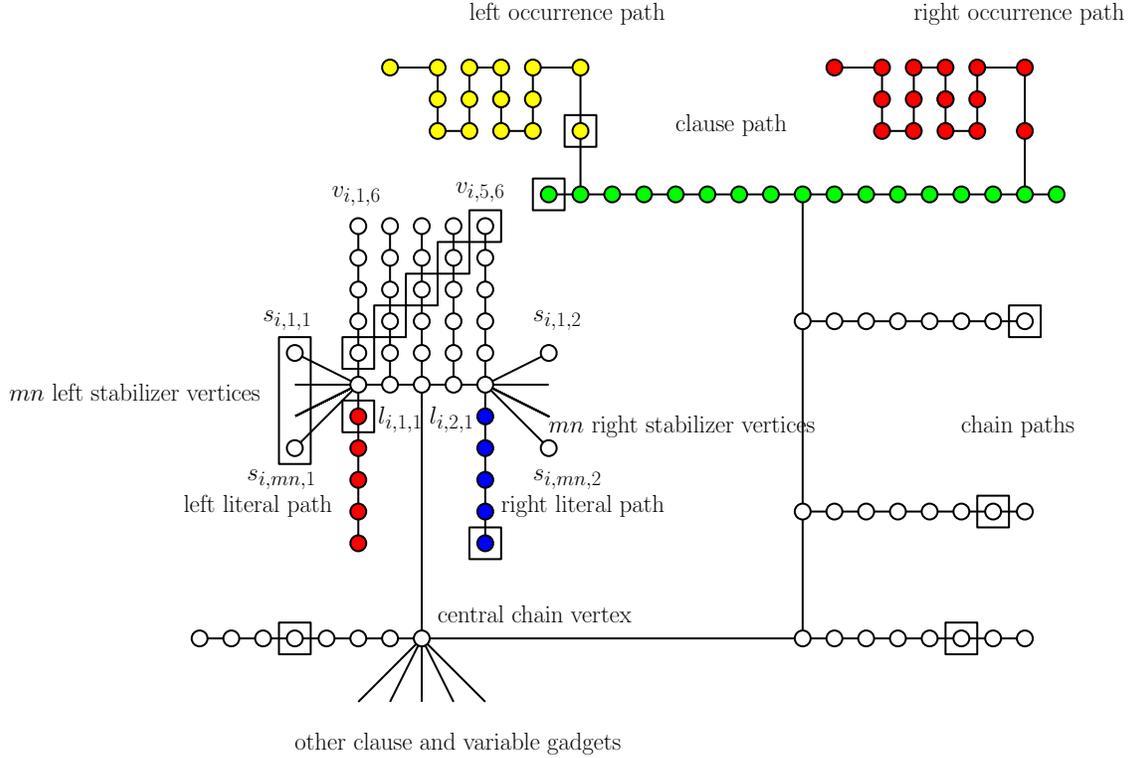}
\caption{\label{fig:red}  A part of the reduction tree. Only some paths are shown coloured, but the  stabilizers and other paths each has a distinct colour as well. The vertices chosen in an MCS are enclosed in squares. Since the left literal path has its first vertex in the MCS, and the right occurrence path has the same colour, the latter does not contribute to the MCS.}
\end{figure}
\subsection{Properties of the reduction tree}
In this section we discuss some properties of the reduction tree constructed in the previous section.

\begin{lemma} \label{lem:if}
 If $k$ clauses of $\theta$ can be satisfied, then $T$ has a consistent subset of size
$\beta + mn^2 + 7n  + 3m - k$.
\end{lemma}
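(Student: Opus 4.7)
The plan is to construct an explicit consistent subset $C$ of the claimed size from a fixed assignment $\sigma$ of $\theta$ satisfying $k$ clauses, and then to verify consistency gadget-by-gadget. I will arrange the contributions so that the $n$ variable gadgets jointly contribute $7n+mn^2$ vertices, the $m$ clause gadgets jointly contribute $3m-k$ vertices, and the $\beta$ chain paths contribute $\beta$ vertices.

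In each variable gadget $X_i$, I include exactly $7+mn$ vertices. Depending on $\sigma(x_i)$, I take one literal-path first vertex ($x_{i,1}$ or $\overline{x_{i,1}}$), together with the far endpoint of the opposite literal path so that the opposite literal path is covered without a nearer wrong-coloured $C$-vertex. From each of the five variable paths I include a single vertex at a depth staggered so that its hop-distance to every stabilizer on the opposite side of the backbone is exactly $6$, matching the paired-stabilizer hop-distance. Finally, I take all $mn$ stabilizers on the chosen side of the gadget. The staggered depths together with the stabilizer pair distance of $6$ ensure that every unpicked stabilizer has its partner tied among its nearest $C$-vertices.

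For each clause gadget $C_j$, I include $3$ vertices in general and $2$ vertices when $\sigma$ satisfies $c_j$, for a total of $3m-k$. The default configuration combines a minimal clause-path cover with occurrence-path vertices chosen to witness each literal's occurrence path; when $c_j$ is satisfied by a literal $l$, the first vertex of $l$'s literal path is already in $C$ and, by the distance-eight separation between literal paths and occurrences, it serves as a tied same-coloured witness for $l$'s occurrence path, allowing one vertex per satisfied clause to be dropped. For each of the $\beta$ chain paths I include a single vertex positioned to cover the chain path and its attaching chain vertex while staying at hop-distance at least $6$ from every stabilizer. Summing contributions gives $\beta + mn^2 + 7n + (3m-k)$.

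The main obstacle will be verifying consistency at the interfaces between gadgets. The tightest constraint is that every unpicked stabilizer $s_{i,l,2}$ must have its partner $s_{i,l,1}$ among its nearest $C$-vertices, which forbids any member of $C$ from lying strictly inside the hop-ball of radius $5$ around $s_{i,l,2}$; this is precisely what pins down the staggered depths $2,3,4,5,6$ for the five variable-path picks and forces the far-endpoint choice for the opposite literal path. Once this stabilizer constraint is met, the remaining checks (for variable paths, literal paths, clause paths, occurrence paths of satisfied clauses, the central chain vertex $p_0$, and the chain paths themselves) reduce to short case analyses driven by the distance-eight separation between variable gadgets and the uniqueness of colours inside and across the gadgets.
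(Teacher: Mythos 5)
Your construction coincides with the paper's own: one stabilizer per pair (all on the side determined by the assignment), the staggered picks $v_{i,j,j+1}$ (resp.\ $v_{i,j,7-j}$) on the variable paths so that every pick is tied at hop-distance with the unpicked stabilizers' partners, the first vertex of the true literal path and the far endpoint of the false one, a leaf plus occurrence-path vertices per clause with one vertex saved per satisfied clause, and one vertex per chain path, giving the same count $\beta + mn^2 + 7n + 3m - k$. This is essentially the paper's proof (your accounting of which occurrence path is omitted in a satisfied clause is the intended one), so no further comparison is needed.
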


\begin{proof}
Start with an empty subset $S$ of vertices.
 Consider an assignment that
satisfies $k$ clauses of $\theta$. If the variable $x_i$ is assigned $1$, then
for $S$ choose every left stabilizer vertex in the variable gadget of $x_i$.
From the $j^{th}$ variable path of the variable gadget of $x_i$, choose the $j+1^{th}$ vertex.
Choose the first vertex from the positive literal path, and the fifth vertex from the negative literal path.
$\\ \\$
On the other hand, if the variable $x_i$ is assigned $0$ then choose every right stabilizer vertex in the variable gadget of $x_i$. From the $j^{th}$ variable path of the variable gadget of $x_i$, choose the $7-j^{th}$ vertex. Choose the fifth vertex from the positive literal path, and the first vertex from the negative literal path. The variable gadgets together contribute $mn^2 + 7n$ vertices towards $S$.
$\\ \\$
% Consider the clause $c_j$ with the literals $l_x$ and $l_y$. If none or both of $l_x$ and $l_y$ are assigned $1$, then choose any one leaf of the  clause path of $c_j$. In both cases choose the first vertex of the occurrence path from which the chosen leaf is at a distance $2$. Furthermore, if none of $l_x$ and $l_y$ are assigned $1$, then choose the last (fifteenth) vertex of the other occurrence path. 
% $\\ \\$
% Otherwise, if only one of $l_x$ and $l_y$ is assigned $1$, then choose the leaf of the clause path closer to the occurrence path whose corresponding literal has been assigned $0$. In this case, again choose the first vertex of the occurrence path from which the chosen leaf is at a distance $2$. Choose the first vertex of the other occurrence path if and only if the corresponding literal has been assigned $1$. The clause gadgets together contribute $3m - k$ vertices towards $S$.
Consider the clause $c_j$ with the literals $l_x$ and $l_y$. If none or both of $l_x$ and $l_y$ are assigned $1$, then choose any one leaf of the  clause path of $c_j$. If one of $l_x$ and $l_y$ is assigned $1$, then choose the leaf of the clause path closer to the occurrence path whose corresponding literal has been assigned $0$. In each case, choose the first vertex of the occurrence path from which the chosen leaf is at a distance $2$. Choose the last (fifteenth) vertex of the other occurrence path if and only if the corresponding literal has also been assigned $0$. The clause gadgets together contribute $3m - k$ vertices towards $S$.
$\\ \\$
Choose the fifth vertex of the chain path of $c_o$. For any other chain path, if its distance to the nearest clause path is $d$, then choose the $(d+8)^{th}$ vertex of the chain path. Thus the chain paths contribute $\beta$ vertices towards $S$. From the construction, it can be checked that $S$ is a consistent subset.

\end{proof}

\begin{lemma} \label{lem:path}
 If a monochromatic path is attached to the rest of $T$ with only one vertex,
 then it can contribute at most one vertex to a minimum consistent subset.
\end{lemma}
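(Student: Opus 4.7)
The plan is to prove the lemma by contradiction: suppose a minimum consistent subset $C$ contains two distinct vertices $p_i$ and $p_j$ (with $i < j$) of the monochromatic path $P = (p_1, p_2, \ldots, p_l)$, where $p_1$ is the unique attachment vertex. I would then show that $C' = C \setminus \{p_j\}$ is still consistent, which contradicts the minimality of $C$.

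The key structural observation is that because $P$ is joined to the rest of $T$ only through $p_1$, every path from a vertex $v \notin P$ to a vertex $p_k \in P$ must pass through $p_1$. Consequently the hop-distance from $p_k$ to any vertex outside $P$ is at least $k$, whereas $p_i$ lies at hop-distance exactly $|k-i| \leq k-1$ from $p_k$ (since $i \geq 1$). Combined with the fact that $P$ is monochromatic, this already forces $p_i$ to be at least as close to $p_k$ as any off-path $C$-vertex for every $k$, and by the same token $p_i$ is at least as close as $p_j$ to every vertex outside $P$.

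With those two observations I would run a short case analysis on $k$. For $p_k$ with $k \leq i$, the nearest $C$-vertex of $p_k$ is either $p_i$ or an off-path vertex of $C$, and in either case removing $p_j$ changes nothing. For $i < k < j$, if the previous nearest neighbour was $p_j$ it now switches to $p_i$, which shares the colour of $p_k$ by monochromaticity, so consistency is preserved. For $k \geq j$, after removal the nearest $C$-vertex is $p_i$ at distance $k-i \leq k-1$, strictly closer than any off-path $C$-vertex, and again of the correct colour. Finally, for any vertex $v \notin P$, its nearest element of $P \cap C$ was already $p_i$ (since the $v$-to-$p_1$ portion is shared and $i < j$), so $v$ is unaffected by deleting $p_j$.

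The only point that requires care, and the place where I expect the minor friction to lie, is confirming that removing $p_j$ cannot promote an off-path $C$-vertex of the wrong colour to a nearest neighbour of some $p_k \in P$. This is precisely what the bound ``off-path vertices are at distance $\geq k$ from $p_k$, while $p_i$ is at distance $\leq k-1$'' rules out uniformly, so the case analysis closes cleanly and yields the contradiction.
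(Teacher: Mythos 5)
Your proof is correct and follows essentially the same route as the paper's: assume two chosen vertices on the path, delete the one farther from the attachment point, and argue the resulting set is still consistent. The paper merely asserts that the deletion preserves consistency, whereas you supply the distance comparisons (on-path vertices are strictly closer to the retained vertex than to any off-path vertex, and off-path vertices were never served by the deleted one) that justify that assertion.
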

\begin{proof}
 Let $u$ be the first vertex of the path $p$, and let it be attached to $T \setminus p$
 via a vertex $v$. Suppose on the contrary to the claim, that two vertices $x$ and $y$ from $p$ are chosen in a minimum
 consistent subset $M$. Wlog let $x$ be closer to $u$ than $y$ is. Then removal of $y$ from $M$
 still gives a consistent subset, which contradicts the minimality of $M$.
\end{proof}

\begin{lemma} \label{lem:stbl}
A minimum consistent subset of $T$ has exactly one vertex from every pair of variable stabilizers. Furthermore, in a variable gadget, all the stabilizers that are in the minimum consistent subset are adjacent to a common vertex of a variable path.
\end{lemma}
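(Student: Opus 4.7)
The plan is to prove the two claims of the lemma in order: first that at least one vertex of every pair must lie in any consistent subset, then the same-side property, and finally the at-most-one-per-pair claim through an exchange argument that rests on the previous two.

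For the first part I would invoke the colouring: each stabilizer pair is assigned a colour that does not appear anywhere else in $T$, so if both members of some pair were absent from a consistent subset $M$, then $s_{i,j,1}$ and $s_{i,j,2}$ would have no same-colour vertex at any distance in $M$, violating consistency. Next I would establish the same-side constraint by contradiction. Suppose in the gadget $X_i$ that some pair $j$ contributes only $s_{i,j,1}$ to $M$, while some other pair $k$ contributes only $s_{i,k,2}$. Since both $s_{i,j,2}$ and $s_{i,k,2}$ are adjacent to $v_{i,5,1}$, we have $d(s_{i,j,2}, s_{i,k,2}) = 2$, whereas the only $M$-vertex of colour $j$ is $s_{i,j,1}$, which lies at distance $6$ from $s_{i,j,2}$. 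The nearest-neighbour set of $s_{i,j,2}$ in $M$ therefore contains no vertex of colour $j$, contradicting consistency. Running this argument over all such pairs of pairs forces all singly-chosen stabilizers within a gadget to lie on a single side.

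For the at-most-one claim, assume some pair has both $s_{i,j,1}, s_{i,j,2} \in M$. A variant of the same-side argument shows that every pair must then be doubly-chosen: otherwise a pair $k$ with only one chosen stabilizer would have its unchosen pair-mate at distance $2$ from the oppositely-sided stabilizer of pair $j$, while its own same-colour witness sits at distance $6$. Now consider deleting all right stabilizers from $M$. Because each right-stabilizer colour appears only within its own pair, such deletions affect consistency only at the deleted vertices themselves; for each deleted $s_{i,j,2}$, the unique surviving vertex of its colour is $s_{i,j,1}$ at distance $6$. It then suffices to certify that no $M$-vertex lies within distance $5$ of any $s_{i,j,2}$ in the reduced subset, so that $s_{i,j,1}$ qualifies as a nearest neighbour.

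The main obstacle is this last certification: ruling out close $M$-vertices along the variable paths, literal paths, and towards the spider centre within the $5$-ball around each $v_{i,5,1}$. To handle these candidates I would combine Lemma~\ref{lem:path} with a local case analysis inside $X_i$, noting that a monochromatic path attached at a single vertex contributes at most one element to $M$, and that any such element can be pushed farther down the path without raising the count, eliminating obstructions to $s_{i,j,1}$ being nearest. Once the reduction carries through, the result is a consistent subset strictly smaller than $M$, contradicting minimality and completing the proof.
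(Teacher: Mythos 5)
Your first two steps are sound and essentially match the paper: the at-least-one-per-pair claim follows from the uniqueness of each pair's colour, and the same-side claim follows from the distance geometry (an unchosen stabilizer sits at distance $2$ from any chosen stabilizer attached to the same variable-path endpoint but at distance $6$ from its only same-coloured partner, so it cannot be consistent). The paper runs the identical cascade: a doubly-chosen pair, or chosen stabilizers of different colours on opposite sides, forces \emph{all} $2mn$ stabilizers of that gadget into the consistent subset.

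The gap is in how you then rule out doubly-chosen pairs. The paper does not attempt a local exchange; it finishes with a global counting argument: once one gadget has all $2mn$ stabilizers and every other gadget contributes at least $mn$, the stabilizers alone account for at least $(n+1)nm$ vertices, which exceeds the upper bound $\beta + mn^2 + 7n + 3m$ on the size of a minimum consistent subset supplied by Lemma~\ref{lem:if} (since $mn > 6m + 7n + 1$ for $m,n \ge 50$). Your deletion argument stalls exactly at the point you flag as the ``main obstacle,'' and that obstacle is not removable by the fix you sketch. If all right stabilizers are in $M$, then $\overline{x_{i,1}}$ has $M$-vertices at distance $2$, so consistency forces one of the first three vertices of the right literal path into $M$; that vertex lies at distance at most $4$ from every right stabilizer, so after you delete the right stabilizers it becomes their nearest $M$-vertex and has the wrong colour --- the reduced set is not consistent. ``Pushing'' that literal-path vertex farther down is not a neutral move: its position is exactly what governs the consistency of $\overline{x_{i,1}}$ and of the same-coloured occurrence paths in the clause gadgets (this is the mechanism Lemmas~\ref{lem:lpath} and~\ref{lem:only} rely on), so relocating it can force additional vertices elsewhere and need not yield a smaller consistent subset. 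To close the argument you should replace the exchange with the counting contradiction against the bound of Lemma~\ref{lem:if}.
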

\begin{proof}
 Since each pair of variable stabilizers has a unique colour, at least one of them must
 be in a consistent subset. Suppose both are in a consistent subset. Since they are only at a distance $2$ from other stabilizers on the same side, all stabilizers must be in the consistent subset as well. Similarly, if stabilizers of difference colours are chosen from
 both sides of the variable gadget, then all stabilizers of the variable gadget must be in the consistent subset. This means that the stabilizers contribute at least $(n+1)nm$, contradicting the minimality of the consistent subset, since due to the large values of $n$, $m$ and $\beta$, putting $k = m$ in Lemma \ref{lem:if}
 gives the maximum possible size of the minimum consistent subset, which still has fewer vertices.
\end{proof}

\begin{lemma} \label{lem:lpath}
 A minimum consistent subset of $T$ has exactly one vertex from every literal path.
 Moreover, in each variable gadget, any one of the first three vertices of one literal path,
 and the fifth vertex of the other literal path, must be in the minimum
 consistent subset.
\end{lemma}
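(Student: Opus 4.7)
The plan is to prove the two parts of the lemma in order, leveraging Lemmas \ref{lem:path} and \ref{lem:stbl}.

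For the first part — that every literal path contributes exactly one vertex to any MCS $M$ — the upper bound of one follows directly from Lemma \ref{lem:path}, since each literal path is a monochromatic path attached to $T$ only through its first vertex. For the matching lower bound, suppose for contradiction that no vertex of some literal path lies in $M$. By Lemma \ref{lem:stbl} the stabilizers on one side of the variable gadget are in $M$, placing a different-colour $M$-vertex at distance two from the literal path's endpoint ($x_{i,1}$ or $\overline{x_{i,1}}$); but the only other vertices of the literal-path colour lie in occurrence paths at distance at least eight, so the endpoint's nearest $M$-vertex is different-coloured, violating consistency.

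For the second part, I may assume without loss of generality (by Lemma \ref{lem:stbl}) that the left stabilizers lie in $M$. Let $x_{i,a}$, $\overline{x_{i,b}}$, and $v_{i,j,p_j}$ denote the unique $M$-vertices in the left literal path, the right literal path, and the $j$-th variable path respectively. The bound $a \in \{1,2,3\}$ is immediate from the consistency of $x_{i,1}$: any left stabilizer is a different-colour $M$-vertex at distance two, so the same-colour $M$-vertex $x_{i,a}$ must satisfy $a-1 \leq 2$.

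The crux is to deduce $b = 5$. The idea is to exploit the right stabilizers, which by Lemma \ref{lem:stbl} are \emph{not} in $M$ but still need a same-colour nearest-$M$ neighbour. The only same-colour option for a right stabilizer $s_{i,j,2}$ is its left partner $s_{i,j,1}$, which is at distance exactly six via the path $s_{i,j,2}, v_{i,5,1}, v_{i,4,1}, v_{i,3,1}, v_{i,2,1}, v_{i,1,1}, s_{i,j,1}$. Hence every $M$-vertex must be at distance at least six from $s_{i,j,2}$; the three closest candidates reachable through $v_{i,5,1}$, namely $v_{i,5,p_5}$ at distance $p_5$, $\overline{x_{i,b}}$ at distance $b+1$, and $v_{i,4,p_4}$ at distance $p_4+1$, must all be at least six. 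Since the fifth variable path has six vertices and the right literal path has five, this forces $p_5 = 6$, $b = 5$, and $p_4 \geq 5$; propagating the inequality $p_j \leq p_{j-1} + 1$ (from the consistency of each $v_{i,j,1}$) backwards, together with $p_1 \leq 2$, even pins down $p_j = j+1$ uniquely, matching the construction of Lemma \ref{lem:if}.

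The main obstacle I foresee is making the right-stabilizer enumeration airtight: I must verify that no stealthy same-colour $M$-vertex sits within distance less than six of $s_{i,j,2}$ (for instance via the central chain vertex $p_0$ into a neighbouring variable gadget or chain path), and that among the different-colour options the three candidates above really dominate all other routes up to distance six. Once this case analysis is complete, both assertions of the lemma follow, and by the symmetric case where the right stabilizers are in $M$ instead, the roles of the two literal paths simply swap.
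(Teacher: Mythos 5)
Your proposal is correct and follows essentially the same route as the paper: Lemma \ref{lem:path} gives the upper bound of one vertex per literal path, the in-$M$ stabilizers at distance two from $x_{i,1}$ force both the existence of a literal-path vertex and the bound $a \le 3$, and the out-of-$M$ right stabilizers (whose only same-colour candidate is their left partner across the variable-path spine) force $b = 5$. Your additional propagation pinning down the variable-path positions $p_j = j+1$ goes beyond what the lemma asserts and is not in the paper's proof, but it is consistent with the construction and harmless.
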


\begin{proof}
Suppose that wlog, due to Lemma \ref{lem:stbl}, all the left stabilizer vertices of the variable gadget of $x_i$ are in a minimum consistent subset. These vertices are at a distance of $1$ and $6$ from the literal paths of the same variable gadget, respectively. Since the occurrence paths are at a distance of more than $6$ from the literal paths, each literal path
must have at least one vertex in the consistent subset. So due to Lemma \ref{lem:path}, each literal path has exactly one vertex in the minimum consistent subset.

Now consider the first vertex of the literal path of $x_i$. The left stabilizers, which are in the consistent subset, are at a distance $2$ from it.
% Then the first vertex of the literal path of $x_i$ must have a vertex of the same colour in the consistent subset, within
% a distance of at most $2$.
This means that one of the first three vertices of the literal path
of $x_i$ must be in the minimum consistent subset. Now consider the right stabilizer vertices.
The nearest consistent subset vertex of the same colour is at a distance of $5$ from them. So if the literal path of $\overline{x_i}$ has any of its first four vertices in the consistent subset, then all the right stabilizers in the variable gadget also must be in the consistent subset, contradicting its minimality.
\end{proof}

\begin{lemma} \label{lem:only}
 If $T$ has a minimum consistent subset of size $\beta + mn^2 + 7n  + 3m - k$ then $\theta$ has an assignment satisfying
 at least $k$ clauses.
\end{lemma}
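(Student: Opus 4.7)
The plan is to reverse the construction in Lemma \ref{lem:if}. Starting from an MCS $M$ of size $\beta + mn^2 + 7n + 3m - k$, I will peel off the forced contributions from the stabilizers, literal paths, variable paths, and chain paths, leaving exactly $3m - k$ vertices for the $m$ clause gadgets. I will then extract an assignment from the stabilizer choices and show that each clause gadget that contributes only $2$ vertices must be satisfied by this assignment, yielding at least $k$ satisfied clauses.

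First, I will use the structural lemmas to pin down all non-clause contributions. Lemma \ref{lem:stbl} forces exactly $mn^2$ stabilizer vertices in $M$ (one per pair, all on one side of each variable gadget). Lemma \ref{lem:lpath} forces exactly $2n$ literal-path vertices (one per literal path). The $5n$ variable paths and the $\beta$ chain paths are monochromatic subpaths with unique colours attached to the rest of $T$ by a single vertex; Lemma \ref{lem:path} caps each such path at one vertex, and the uniqueness of its colour forces at least one, so together they contribute exactly $5n + \beta$. Summing, the non-clause contribution is $mn^2 + 7n + \beta$, leaving precisely $3m - k$ vertices for the clause gadgets.

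Next I will define the assignment: for each variable $x_i$, set $x_i = 1$ when $M$ contains all left stabilizers of $X_i$ (equivalently, by Lemma \ref{lem:lpath}, one of the first three vertices of the positive literal path lies in $M$ while the fifth vertex of the negative literal path lies in $M$), and $x_i = 0$ otherwise. The crucial remaining step is to show that each clause gadget contributes at least $2$ vertices to $M$, and contributes exactly $2$ only when the corresponding clause $c_j$ is satisfied by this assignment. The lower bound of $2$ follows from the clause path having a unique colour (so at least one clause-path vertex is needed) plus the need to keep the clause-path vertices adjacent to the differently coloured attachment points at positions $2$, $9$, and $16$ consistent. For the upper bound of $2$ under satisfaction, I will exploit that a literal assigned $1$ has its literal-path $M$-vertex close enough (through the variable gadget, $p_0$, and the chain vertices) to the matching occurrence path that the first vertex of that occurrence path can be omitted while still keeping every vertex consistent; when neither literal is assigned $1$, the nearest same-coloured $M$-vertex to either occurrence path is the fifth vertex of the distant literal path, and both occurrence-path first vertices must enter $M$.

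The main obstacle will be the last case analysis: I need to verify carefully which hop-distances in $T$ permit an occurrence-path vertex to be omitted, and to confirm that the constants baked into the construction (the length-$17$ clause path, length-$8$ chain paths, length-$5$ literal paths, distance-$8$ separation of literal colours from $X_i$, and the $mn$ stabilizer pairs) make every tight consistency check go through without spillover between adjacent gadgets via the chain vertices attached at vertex $9$ of each clause path. Once that case analysis is in hand, the clause budget $3m - k$ together with per-gadget contribution in $\{2,3\}$ forces at least $k$ gadgets to contribute $2$, so at least $k$ clauses are satisfied by the extracted assignment, completing the proof.
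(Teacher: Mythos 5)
Your proposal follows the paper's proof of this lemma in essentially the same way: the same peeling-off of the forced contributions ($\beta+5n$ from chain and variable paths via Lemma \ref{lem:path}, $mn^2$ from stabilizers via Lemma \ref{lem:stbl}, $2n$ from literal paths via Lemma \ref{lem:lpath}), the same extraction of an assignment from which end of each literal path carries the chosen vertex, and the same final argument that an occurrence path can be omitted only when its matching literal-path vertex (one of the first three) is at least as close as the chosen clause-path leaf, which corresponds exactly to that literal being assigned $1$. The one step you explicitly defer --- the hop-distance verification that at least one occurrence path per clause gadget must contribute and that only a ``true'' literal's occurrence path may be omitted --- is likewise only asserted rather than computed in the paper, so your plan matches it in both structure and level of detail.
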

\begin{proof}
 Due to Lemma \ref{lem:path}, all the variable paths and chain paths contribute a vertex each to a minimum consistent subset $M$, totalling to $\beta + 5n$. The stabilizers contribute another $n^2m$, and the literal paths $2n$. By Lemma \ref{lem:lpath}, in each varible gadget, a literal path has any one of its first three vertices in $M$, and the other literal path has its fifth vertex in the consistent subset.
 Construct an assignment of $\theta$ as follows. If the literal path of $x_i$ has any one of its first three vertices in the consistent subset, then assign $1$ to $x_i$ and $0$ to $\overline{x_i}$.
 Otherwise, if the literal path of $x_i$ has its fifth vertex in the consistent subset, then assign $0$ to $x_i$ and $1$ to $\overline{x_i}$. Now we show that this assignment satisfies $k$ clauses of $\theta$. Since each clause line is of a distinct colour, they together contribute at least $m$ vertices to $M$. This means, the remaining at most $2m-k$ vertices of $M$ are contributed by occurrence paths. The choice of a consistent subset vertex in a clause path forces at least one of the occurrence paths to have a vertex in the consistent subset. This means that there are at least $k$ clause gadgets where exactly one occurrence line contributes to the consistent subset. The distance between any two occurrence paths from different clause gadgets is more than the distance between any occurrence path and the farther leaf of its clause path. So only the occurrence paths having the same colour as literal vertices whose third or previous vertices are in the consistent subset, may not contribute to the consistent subset. But these exactly correspond to the literals that have been assigned $1$, and hence each of the aforementioned $k$ clauses have at least one literal satisfied, proving the claim.
\end{proof}
The above lemmas immediately bring us to the following theorem.
\begin{theorem}
The Minimum Consistent Subset is NP-complete for trees.
\end{theorem}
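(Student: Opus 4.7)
The plan is to combine the two directions already established by Lemmas \ref{lem:if} and \ref{lem:only} with the standard observations needed for NP-completeness. First, I would observe that MCS on trees is clearly in NP: given a candidate subset $C \subseteq V(T)$, one can verify in polynomial time that $|C|$ meets the target bound and that every vertex $v \in V \setminus C$ has at least one nearest neighbour in $C$ of its own colour, using BFS from each vertex.

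For hardness, I would reduce from MAX-2SAT, which is NP-hard. Given an instance $\theta$ with $n$ variables and $m$ clauses and a threshold $k$, I construct the tree $T$ as described in the previous subsection. The construction produces $O(mn^2)$ vertices in total (dominated by the $mn$ stabilizer pairs per variable gadget), all edges are explicit, and each colour is assigned in constant time per vertex, so the reduction runs in polynomial time. The target size for the MCS instance is set to $\beta + mn^2 + 7n + 3m - k$, where $\beta = 3m+1$, as dictated by Lemma \ref{lem:if}.

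Correctness is immediate from the two lemmas of the previous subsection. Lemma \ref{lem:if} gives the forward direction: if at least $k$ clauses of $\theta$ can be satisfied, then $T$ has a consistent subset of size at most $\beta + mn^2 + 7n + 3m - k$, hence a minimum consistent subset of at most this size. Lemma \ref{lem:only} provides the converse: if the minimum consistent subset of $T$ has size at most $\beta + mn^2 + 7n + 3m - k$, then some assignment of $\theta$ satisfies at least $k$ clauses. Together these yield that $\theta$ admits an assignment satisfying $k$ clauses if and only if $T$ has a consistent subset of size at most $\beta + mn^2 + 7n + 3m - k$, establishing NP-hardness.

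The essential content of the argument has already been absorbed into the structural lemmas, so the main obstacle was not in this final step but in the gadget design itself: ensuring that the variable gadgets force a literal path to commit to exactly one truth value (via Lemma \ref{lem:lpath}), that the stabilizers pin down the choice (Lemma \ref{lem:stbl}), and that the distances between occurrence paths and clause paths are calibrated so that exactly one occurrence vertex per satisfied clause can be saved. With those pieces in place, the theorem follows by combining membership in NP with the polynomial-time reduction and the two-way equivalence.
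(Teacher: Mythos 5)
Your proposal is correct and follows essentially the same route as the paper: membership in NP is immediate, and NP-hardness follows by combining the forward direction (Lemma \ref{lem:if}) with the converse (Lemma \ref{lem:only}) for the polynomial-time constructible tree $T$. The extra detail you supply on the size of the construction and the verification procedure is consistent with, and slightly more explicit than, the paper's own argument.
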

\begin{proof}
 It is easy to see that the problem is in NP. As for  NP-hardness, Lemmas \ref{lem:if} and \ref{lem:only}
 establish a relationship between the number of satisfiable clauses of $\theta$ and the size of the minimum consistent subset of $T$, which can be obtained from $\theta$ in polynomial time.
\end{proof}
\section{Conclusion}

This paper presents a polynomial time algorithm for the minimum consistent subset of spider graphs where $k$ is constant. However, it remains unknown whether or not MCS can be computed for spider graphs in polynomial time irrespective of the number of colour classes. It is also shown that the problem is NP-complete for trees. However, the reduction uses polynomially many colour classes. The problem remains unsolved for $k$-chromatic trees for constant $k \ge 3$.

%Bibliography
\bibliographystyle{unsrt}  
\bibliography{references}

\end{document}